\documentclass[12pt,oneside,english]{amsart}
\usepackage[utf8]{inputenc}
\usepackage[a4paper]{geometry}
\geometry{verbose,tmargin=2.5cm,bmargin=2.5cm,lmargin=1.5cm,rmargin=1.5cm}
\usepackage{footnote}
\usepackage{color}
\usepackage{babel}
\usepackage{array}
\usepackage{float}
\usepackage{url}

\usepackage[foot]{amsaddr}

\usepackage{enumitem}
\usepackage{multirow}
\usepackage{amsmath}
\usepackage{amsthm}
\usepackage{graphicx}
\usepackage{setspace}
\usepackage[authoryear]{natbib}
%\onehalfspacing
\usepackage[unicode=true,pdfusetitle,
 bookmarks=true,bookmarksnumbered=false,bookmarksopen=false,
 breaklinks=false,pdfborder={0 0 1},backref=section,colorlinks=true]
 {hyperref}
\hypersetup{
 citecolor=cite-blue}

\makeatletter

%%%%%%%%%%%%%%%%%%%%%%%%%%%%%% LyX specific LaTeX commands.
%% Because html converters don't know tabularnewline

%%%%%%%%%%%%%%%%%%%%%%%%%%%%%% Textclass specific LaTeX commands.
      % auxiliary length
  \theoremstyle{plain}
  \newtheorem{prop}{\protect\propositionname}

  \theoremstyle{plain}
  \newtheorem{cor}{\protect\corollaryname}
  \theoremstyle{remark}
  
  \theoremstyle{definition}
  \newtheorem{defn}{\protect\definitionname}
\theoremstyle{plain}
\newtheorem{thm}{\protect\theoremname}
  \theoremstyle{plain}
  \newtheorem*{lem*}{\protect\lemmaname}

%%%%%%%%%%%%%%%%%%%%%%%%%%%%%% User specified LaTeX commands.
\definecolor{cite-blue}{RGB}{0,0,204}
\date{}

\makeatother

  \providecommand{\definitionname}{Definition}
  \providecommand{\lemmaname}{Lemma}
  \providecommand{\propositionname}{Proposition}
  \providecommand{\remarkname}{Remark}
\providecommand{\corollaryname}{Corollary}
\providecommand{\theoremname}{Theorem}

\begin{document}

\title{Assignment Maximization}

\author{Mustafa O\u{g}uz Afacan}\thanks{\textbf{Mustafa O\u{g}uz Afacan}: Sabanc\i{} University, Faculty of Art and Social Sciences, Orhanli,
34956, Istanbul, Turkey. e-mail: mafacan@sabanciuniv.edu}

\author{Inácio Bó}
\thanks{\textbf{Inácio Bó}: University of York, Department of Economics and Related Studies. website: \protect\url{http://www.inaciobo.com}; e-mail:
inacio.lanaribo@york.ac.uk}

\author{Bertan Turhan}\thanks{\textbf{Bertan Turhan}: Iowa State University, Department of Economics, 260 Heady Hall, Ames, IA, 50011, USA. e-mail: bertan@iastate.edu} 

\date{}
\thanks{We thank Ahmet Alkan, Orhan Aygün, Mehmet Barlo, Umut Dur, Andrei Gomberg,
Isa Hafal\i r, Onur Kesten, Vikram Manjunath, Tridib Sharma, Tayfun
Sönmez, Alex Teytelboym, William Thomson, and Utku Ünver for helpful
comments. Afacan acknowledges the Marie Curie International Reintegration
Grant. Bó acknowledges financial support by the Deutsche Forschungsgemeinschaft
(KU 1971/3-1). }

\maketitle

\begin{abstract}
We evaluate the goal of maximizing the number of individuals matched to acceptable outcomes. We show that it implies incentive, fairness, and implementation impossibilities. Despite that, we present two classes of mechanisms that maximize assignments. The first are Pareto efficient, and undominated --- in terms of number of assignments --- in equilibrium. The second are fair for unassigned students and assign weakly more students than stable mechanisms in equilibrium. 
\end{abstract}

\emph{JEL classification}: D47, C78, D63.

\emph{\hspace{-0.6cm}Keywords}: Market Design, Matching, Maximal
Matching, Fairness, Object Allocation, School Choice.

\section{Introduction\label{sec:Introduction}}

In this paper, we consider the economic problems faced by a market designer who wants
to produce student matchings (or object allocations) that are responsive to agents' preferences and leave the smallest number of them unmatched (that is, have maximum cardinality among individually rational matchings).\footnote{Even though our entire analysis translates naturally to most unit-demand discrete assignment problems, we will use the framing of school allocation throughout the paper.}

One of the main motivations for studying this problem is the fact that, in practice, market designers often make adaptations to standard procedures with the objective of preventing agents from being left unmatched. The real life use of allocation mechanisms in school choice procedures, for example, often consists of using a standard mechanism, such as the Gale-Shapley deferred acceptance \citep{gale1962college}, followed by some additional procedure to assign the students who were left unmatched into some school. These secondary steps or other ad-hoc methods for filling up the remaining seats, however, result in the loss of the properties of the mechanism that was used in the first place, such as fairness and strategy-proofness \citep{kestendur2019}. In this paper we start instead from the presumption that the market designer has the objective of leaving the minimum number of students unmatched. While this objective is not attainable via a strategy-proof mechanism, we propose mechanisms that produce maximum matchings and are efficient or satisfy a novel fairness criterion, when students are non-strategic. We also show that they satisfy desirable characteristics in equilibrium, and increase the cardinality of the matching as the proportion of truthful agents increases. 

\section{Related Literature\label{related}}

While algorithms for finding maximum matchings are well-known \citep{kuhn1955hungarian,Berge842}, the research on the incentives induced by the use of these procedures is limited, and typically rely on random mechanisms. One exception is \cite{afa-dur}, which follows-up to this paper and shows that no strategy-proof  and individually rational mechanism systematically matches more students than either of Boston, Gale-Shapley deferred acceptance, and serial dictatorship mechanisms. \cite{Krysta_2014} consider the problem of producing maximal matchings in a house allocation problem. They show that there is no mechanism that is deterministic, maximal, and strategy-proof, and provide instead a random mechanism that is strategy-proof and yields approximately-maximal outcomes. \cite{bogomolnaia2015size} evaluate the trade-off between maximality and envy-freeness, a notion of fairness that is stronger than the ones we consider in this paper. \cite{BogomolnaiaMoulin2004} consider the random assignment when agents have dichotomous preferences. When that is the case, Pareto efficiency is equivalent to maximality of the matching, and moreover, since agents are indifferent between all ``acceptable'' allocations, maximality doesn't result in incentive problems even in deterministic mechanisms.  \cite{noda}   studies the matching size achieved by strategy-proof mechanisms in a general model of matching with constraints.  

Finding the matching with maximum cardinality subject to some constraints is also a problem that is explored in the literature. \cite{IrvingManlove2010} consider the problem of finding stable matchings with maximum cardinality when priorities have ties, which is known to be an NP-hard problem, and present heuristics for finding them. \cite{aslagi20} also consider object assignment problems under distributional constraints. The authors show that variants of serial dictatorship and Probabilistic Serial \citep{bogomolnaia2001new} mechanisms assign at least as many agents as one can match under the constraints, while the violations of the constraints are relatively small.

Assignment maximization has been the primary objective in the organ exchange literature, as it means the maximum number of transplants.
This literature was initiated by the seminal work on kidney exchange  of \cite{kidney1}. In a subsequent study, in order to accommodate
several physical and geographical restrictions in operating transplants, \cite{roth2005pairwise} introduce the idea of pairwise kidney exchange where exchanges can only be made between two pairs.  They suggest implementing the priority-based maximal matching algorithm
from the combinatorial optimization literature \citep{book}. The first stages of both EAM and FAM are adaptations of the priority-based
maximal matching algorithm. Some other  studies on organ exchanges include  \cite{yilmaz}, \cite{tommy}, \cite{heo}, \cite{ergin2015lung}, 
 \cite{nic}, and \cite{ergin3}.

Refugee reassignment is another real-world application in which maximality  might be a primary design objective. \cite{andersson2016assigning} study the problem of finding housing for refugees
once they have been granted asylum. The authors propose an easy-to-implement
mechanism that finds an efficient stable maximum matching. They show
that such a matching guarantees that housing is efficiently provided
to a maximum number of refugees and that no unmatched refugee-landlord pair prefers each other.

Our ``fairness for unassigned students" is a weakening of the usual stability of \cite{gale1962college}, therefore, the current study is also related to the surging literature on the weakening of stability in different ways. Among others, \cite{dur}, \cite{afacan}, \cite{mor-ehler}, and \cite{troyan} are recent papers from that literature.

\section{Model\label{sec:Model}}

A $\mathbf{school}$ $\mathbf{choice}$ $\mathbf{problem}$ consists
of a finite set of students $I=\{i_{1},...,i_{n}\}$, a finite set of schools $S=\{s_{1},...,s_{m}\}$, a strict priority structure for schools $\succ=(\succ_{s})_{s\in S}$
where $\succ_{s}$ is a linear order over $I$, a capacity vector $q=(q_{s_{1}},...,q_{s_{m}})$, and a profile of strict preference of students $P=(P_{i})_{i\in I}$, where $P_{i}$ is student $i$'s preference relation over $S\cup\{\emptyset\}$ and $\emptyset$ denotes the option of being unassigned. We denote the set of all possible preferences for a student by $\mathcal{P}$. Let $R_{i}$ denote the \emph{at-least-as-good-as }preference relation associated with $P_{i}$, that is: $sR_{i}s^{'}$ $\Leftrightarrow$ $sP_{i}s^{'}$ or $s=s^{'}$. A school $s$ is \textbf{acceptable} to $i$ if $sP_{i}\emptyset$, and \textbf{unacceptable} otherwise. Let $Ac(P_i)=\{c\in S:\ c P_i \emptyset\}$.

In the rest of the paper, we consider the tuple $(I,S,\succ,q)$ as
the commonly known primitive of the problem and refer to it as the
\textbf{market}. We suppress all those from the problem notation and
simply write $P$ to denote the problem. A \textbf{matching} is a
function $\mu:I\rightarrow S\cup\{\emptyset\}$ such that for any
$s\in S$, $|\mu^{-1}(s)|\leq q_{s}$. A student $i$ is \textbf{assigned}
under $\mu$ if $\mu\left(i\right)\neq\emptyset$. For any $k\in I\cup S$,
we denote by $\mu_{k}$ the assignment of $k$. Let $|\mu|$ be the
total number of students assigned under $\mu$.

A matching $\mu$ is \textbf{individually rational} if, for any student
$i\in I$, $\mu_{i}R_{i}\emptyset$. A matching $\mu$ is \textbf{non-wasteful}
if for any school $s$ such that $sP_{i}\mu_{i}$ for some student
$i\in I$, $|\mu_{s}|=q_{s}$. A matching $\mu$ is \textbf{fair}
if there is no student-school pair $(i,s)$ such that $sP_{i}\mu_{i}$,
and for some student $j\in\mu_{s}$, $i\succ_{s}j$. A matching $\mu$
is \textbf{stable} if it is individually rational, non-wasteful, and
fair.

In the rest of the paper, we will consider only individually rational
matchings. Therefore, whenever we refer to a matching, unless explicitly
stated, we refer to an individually rational matching. Let $\mathcal{M}$
be the set of matchings.

A matching $\mu$ \textbf{dominates} another matching $\mu'$ if,
for any student $i\in S$, $\mu_{i}R_{i}\mu'_{i}$, and for some student
$j$, $\mu_{j}P_{j}\mu'_{j}$. A matching $\mu$ is \textbf{efficient}
if it is not dominated by any other matching. We say that
a matching $\mu$ \textbf{size-wise dominates} another matching $\mu'$
if $|\mu|>|\mu'|$. A matching $\mu$ is \textbf{maximal} if it is
not size-wise dominated.\footnote{Notice that the notions of size domination and maximality we use is in th set of \emph{agents} (or nodes) involved in a matching. In most of the literature in graph theory, the cardinality of a matching is measured in the set of edges of the graph that are part of the matching. While when considering the set of edges there is a difference between maximal and maximum cardinality matchings, in our setup these are equivalent: maximal matchings are always maximum.}

A \textbf{mechanism} $\psi$ is a function from $\mathcal{P}^{|I|}$
to $\mathcal{M}$. A Mechanism $\psi$ is \textbf{strategy-proof} if there
exist no problem $P$, and student $i$ with a false preference $P'_{i}$
such that $\psi_{i}(P'_{i},P_{-i})P_{i}\psi_{i}\left(P\right)$.

A mechanism $\psi$ \textbf{size-wise dominates}
another mechanism $\phi$ if, for any problem $P$, $\phi\left(P\right)$
does not size-wise dominate $\psi\left(P\right)$, while, for some
problem $P'$, $\psi\left(P'\right)$ size-wise dominates $\phi\left(P'\right)$.
A mechanism $\psi$ is maximal if it is not size-wise dominated by
any other mechanism.

We start our analysis by first observing that none among four well-known mechanisms commonly used and considered for the kind of allocation problems that we are considering --- deferred-acceptance ($DA$), top trading cycles ($TTC$), Boston ($BM$), and serial dictatorship ($SD$) --- is maximal.\footnote{For the description of these mechanisms, the reader could refer to \cite{abdulkadiroglu2003school}.}

\begin{prop}\label{pop: well-known mechanisms are not maxima}
None of $DA$, $TTC$, $BM$, and $SD$ is maximal.
\end{prop}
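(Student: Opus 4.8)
The plan is to exhibit a single, small school-choice problem on which each of the four mechanisms fails to produce a maximum matching, by constructing a matching that assigns strictly more students than the one returned by the mechanism. Since all four mechanisms agree on certain simple instances, I would look for one instance that simultaneously defeats $DA$, $TTC$, $BM$, and $SD$, but if that proves awkward it is enough to give (possibly different) small examples for each. The cleanest route is the classic ``congestion'' configuration: take two students $i_1,i_2$ and two schools $s_1,s_2$, each with capacity $1$. Let $i_1$ rank $s_1$ first and $s_2$ second (both acceptable), let $i_2$ find only $s_1$ acceptable, and set the priority order so that $i_2 \succ_{s_1} i_1$. Then in $DA$, $TTC$, $BM$, and $SD$ (with $i_2$ ordered before $i_1$), student $i_2$ takes $s_1$, student $i_1$ is rejected from $s_1$ and then — in the versions where $i_1$ applies to $s_2$ — this actually assigns both; so I instead need $i_1$ to be the one who ``blocks.'' The working instance is: $i_1$ finds only $s_1$ acceptable; $i_2$ ranks $s_1$ first, $s_2$ second; and $i_1 \succ_{s_1} i_2$. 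Then every one of the four mechanisms assigns $i_1$ to $s_1$ and leaves $i_2$ unassigned (for $SD$, order $i_1$ before $i_2$), producing a matching of size $1$, whereas the matching $\mu$ with $\mu_{i_1}=\emptyset$... no — that is not individually rational since $i_1$ prefers $s_1$ to $\emptyset$; rather the point is that the \emph{maximal} matching $\mu'$ with $\mu'_{i_2}=s_1$, $\mu'_{i_1}=\emptyset$ has size $1$ too. So two students and two schools do not suffice; I need a genuine augmenting-path obstruction.

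The smallest genuine example uses a ``chain'' that each mechanism breaks suboptimally. Take students $i_1, i_2, i_3$ and schools $s_1, s_2$ with $q_{s_1}=q_{s_2}=1$. Let $P_{i_1}: s_1$; $P_{i_2}: s_1, s_2$; $P_{i_3}: s_2$ — i.e., $i_1$ only wants $s_1$, $i_3$ only wants $s_2$, and $i_2$ wants $s_1$ then $s_2$. Set priorities $i_1 \succ_{s_1} i_2$ and $i_2 \succ_{s_2} i_3$. Under $DA$: $i_1$ and $i_2$ apply to $s_1$, $i_2$ is rejected, applies to $s_2$, $i_3$ is rejected; outcome $\{i_1\!\to\! s_1, i_2\!\to\! s_2\}$, size $2$, with $i_3$ unassigned. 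The same outcome arises under $BM$ (round $1$: $i_1$ gets $s_1$, $i_3$ gets $s_2$, $i_2$ unassigned — so $BM$ actually gives size $2$ differently, $\{i_1,i_3\}$ assigned). Hmm — so I must check each mechanism explicitly and may need to tune the instance; the robust fix is to give $i_3$ priority at $s_2$ over $i_2$ as well in the $BM$ case, or simply present the instance tailored so that the unique maximum matching is $\{i_1\!\to\! s_1,\ i_2\!\to\! s_2\}$... which $DA$ already finds. The real obstruction needs the mechanism to ``waste'' a seat via a bad priority/preference interaction: e.g. add a school $s_3$ that only $i_1$ finds acceptable but where $i_1$ has low priority, so that in $TTC$/$SD$ a cycle or early pick diverts $i_1$ away from $s_1$, leaving $s_1$ idle while $i_2$ is stuck behind $i_3$ at $s_2$. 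I would finalize the numbers by writing out the four runs and checking $|\mu| < |\mu^*|$ where $\mu^*$ is the hand-constructed maximum matching.

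The key steps, in order: (i) fix a candidate market $(I,S,\succ,q)$ and preference profile $P$; (ii) run $DA$, $TTC$, $BM$, and $SD$ on $P$ and record each output matching $\mu^{DA},\mu^{TTC},\mu^{BM},\mu^{SD}$ and its cardinality; (iii) exhibit an explicit individually rational matching $\mu^*$ with $|\mu^*| > |\mu^{\varphi}|$ for $\varphi \in \{DA,TTC,BM,SD\}$; (iv) conclude that none of the four is maximal, since being maximal requires $|\mathrm{mech}(P)| \ge |\mu|$ for every individually rational $\mu$ at every $P$. For $SD$ one must be slightly careful: maximality of a mechanism is a property for \emph{all} tie-breaking/serial orders, so strictly I should either fix the order as part of the statement's reading or argue the example defeats $SD$ for every order — the latter is cleanest if the instance is symmetric enough that no ordering of the three students can recover the extra assignment.

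The main obstacle I anticipate is \emph{simultaneity}: $BM$ and $DA$ behave very differently on rejection chains, and $TTC$/$SD$ do not even use the rejection structure, so a single profile that is ``bad'' for all four at once is delicate to pin down. I expect to spend the real effort in step (ii) — carefully tracing each algorithm — and in tuning $P$ and $\succ$ so that the hand-built $\mu^*$ beats all four outputs simultaneously (or, failing that, presenting four short paragraphs, one per mechanism, each with its own minimal counterexample). Everything else is routine bookkeeping.
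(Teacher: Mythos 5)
There is a genuine gap: the entire content of this proposition is a concrete counterexample, and you never actually produce one. Your proposal ends with a plan (``I would finalize the numbers by writing out the four runs'') rather than a verified instance, and the instances you do trace either fail or are traced incorrectly. Worse, the reasoning that led you to abandon the two-student, two-school case is flawed, because that case \emph{does} suffice. You had the right preferences in hand --- $i_1$ ranks $s_1$ then $s_2$ (both acceptable), $i_2$ finds only $s_1$ acceptable --- but you set the priority as $i_2\succ_{s_1}i_1$, correctly saw that this assigns everyone, and then ``fixed'' it by swapping the \emph{preferences} instead of the \emph{priority}. The correct fix is to keep those preferences and put $i_1\succ_{s_1}i_2$: then under $DA$, $TTC$, and $BM$ both students contest $s_1$, $i_1$ wins it, and $i_2$ (who finds nothing else acceptable) is left unassigned, giving a matching of size $1$; the same holds for $SD$ with $i_1$ ordered first. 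Meanwhile the individually rational matching $\mu'$ with $\mu'_{i_1}=s_2$ and $\mu'_{i_2}=s_1$ has size $2$. This is exactly the paper's proof. Your alternative instance ($i_1$ wants only $s_1$, $i_2$ ranks $s_1$ then $s_2$, $i_1\succ_{s_1}i_2$) is traced incorrectly: you claim $i_2$ is left unassigned, but $i_2$ is rejected from $s_1$ and then obtains $s_2$ under all four mechanisms, so all agents are matched and nothing is shown. Your subsequent three-student instance also cannot work as stated, since with only two unit-capacity schools every matching has size at most $2$ and $DA$ already achieves that.

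Your structural outline in steps (i)--(iv) is the right shape of argument, and your caveat about $SD$ depending on the ordering is a fair point (the paper handles it by fixing the ordering with $i_1$ first, and by symmetry of labels this covers any fixed ordering). But as submitted, the proof is incomplete: no instance is exhibited for which all four mechanisms are verified to produce a non-maximum matching, so the proposition is not established.
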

\begin{proof}
Let $I=\{i_1, i_2\}$ and $S=\{a,b\}$, each with unit quota. Let $P_{i_1}:\ a, b, \emptyset$ and $P_{i_2}: a, \emptyset$. The priorities are such that agent $i_1$ has the top priority at object $a$. Then, the $DA$, $TTC$, and $BM$ outcomes are the same. If we write $\mu$ for their outcome, then $\mu_{i_1}=a$ and $\mu_{i_2}=\emptyset$. Likewise, for $SD$, let us consider the ordering where agent $i_1$ comes first. Then, the $SD$ outcome is the same as $\mu$. This shows that none of these mechanisms is maximal because the matching $\mu'$ where $\mu'_{i_1}=b$ and $\mu'_{i_2}=a$ is individually rational and matches more agents than $\mu$. 
\end{proof}

Given the lack of maximality of the well-known mechanisms,  in the rest of the paper, we introduce two maximal mechanisms and study their properties.

\subsection{A Class of Efficient Maximal Mechanisms}

Given a problem
$P$ and an enumeration of the students in $I$ $\left(i_{1},..i_{n}\right)$,

\textbf{Step 0.} Let $\xi^{0}=\mathcal{M}$.

\textbf{Step 1.}

\textbf{Sub-step 1.1.} Define the set $\xi^{1}\subseteq\xi^{0}$ as
follows:
\begin{center}
\[
\xi^{1}=\left\{ \begin{array}{ll}
\{\mu\in\xi^{0}:\ \mu_{i_{1}}\neq\emptyset\} & \mbox{If \ensuremath{\exists\mu\in\xi^{0}\ }such that\ensuremath{\ \mu_{i_{1}}\neq\emptyset}}\\
\xi^{0} & \mbox{otherwise}
\end{array}\right.
\]
\par\end{center}

In general, for every $k\leq n$,

\textbf{Sub-step 1.k.} Define the set $\xi^{k}\subseteq\xi^{k-1}$
as follows:
\begin{center}
\[
\xi^{k}=\left\{ \begin{array}{ll}
\{\mu\in\xi^{k-1}:\ \mu_{i_{k}}\neq\emptyset\} & \mbox{If \ensuremath{\exists\mu\in\xi^{k-1}\ }such that\ensuremath{\ \mu_{i_{k}}\neq\emptyset}}\\
\xi^{k-1} & \mbox{otherwise}
\end{array}\right.
\]
\par\end{center}

Step $1$ ends with the selection of a matching $\mu\in\xi^{n}$.

\textbf{Step 2.}

In general:

\textbf{Sub-step 2.k.} Let $\tilde{\mu}$ be the matching obtained
in the previous step of the procedure. If $\tilde{\mu}$ does not admit an improving
chain or cycle then the algorithm ends with the final outcome of $\tilde{\mu}$.
Otherwise, pick such a chain or cycle, and obtain a new matching by
assigning each student in the chosen chain (cycle) to the school he
prefers in the chain (cycle), and move to the next sub-step.

\begin{thm}
\label{thm:EAM is maximal and  efficient} Every $EAM$ mechanism
is maximal and efficient.
\end{thm}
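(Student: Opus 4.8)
The plan is to split the claim into three pieces: (a)~the matching $\mu$ selected at the end of Step~1 already has the maximum cardinality $M^{*}:=\max_{\nu\in\mathcal M}|\nu|$ among individually rational matchings; (b)~Step~2 neither decreases this cardinality nor fails to terminate; and (c)~the final matching is efficient. Parts (a) and (b) give maximality of the mechanism almost for free (an outcome of size $M^{*}(P)$ can never be size-wise dominated, so neither can a mechanism that always produces one), and (c) is exactly the efficiency assertion.

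The core is (a). I would pass to the auxiliary bipartite graph in which every school $s$ is replaced by $q_{s}$ unit-capacity copies; then matchings become ordinary bipartite matchings, individual rationality just means every edge used is acceptable, and for any two such matchings $\mu,\nu$ the symmetric difference $\mu\oplus\nu$ decomposes into vertex-disjoint alternating paths and cycles. First, by induction on $k$, I would show that $\xi^{k}$ contains a matching of size $M^{*}$. In the inductive step, if the $i_{k}$-restriction fires, take a maximum $\mu\in\xi^{k-1}$; if $\mu_{i_{k}}\neq\emptyset$ we are done, otherwise pick any $\nu\in\xi^{k-1}$ assigning $i_{k}$ and follow the alternating $\mu$--$\nu$ path that emanates from $i_{k}$ (starting with a $\nu$-edge). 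This path cannot end at a school-copy left empty by $\mu$, for flipping it would give an individually rational matching of size $M^{*}+1$; hence it ends at some student $i^{*}$ unmatched in $\nu$, and flipping it yields $\mu'$ of the same size $M^{*}$ that assigns $i_{k}$. The only student $\mu'$ un-assigns relative to $\mu$ is $i^{*}$, and $i^{*}$ cannot be a student the algorithm committed to in sub-steps $1,\dots,k-1$, because such a student is assigned in \emph{every} matching that survives to $\xi^{k-1}$ --- in particular in $\nu$ --- so $\mu'\in\xi^{k}$. I would then strengthen this to: \emph{every} $\mu\in\xi^{n}$ has $|\mu|=M^{*}$. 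Indeed, if some $\mu\in\xi^{n}$ had $|\mu|<M^{*}$, take a size-$M^{*}$ witness $\nu\in\xi^{n}$ from the previous step; $\mu\oplus\nu$ contains an augmenting path for $\mu$ whose student endpoint $i^{*}$ is unmatched in $\mu$ but matched in $\nu$. Since $\mu\in\xi^{n}$ and $\mu_{i^{*}}=\emptyset$, the $i^{*}$-restriction never fired, i.e.\ no matching in the relevant $\xi$-set assigns $i^{*}$; but $\nu$ lies in that set and assigns $i^{*}$, a contradiction. Hence the matching selected in Step~1 is maximal, whatever the enumeration and the selection rule.

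For (b), each iteration of Step~2 replaces $\tilde\mu$ by a matching that Pareto-dominates it, and a Pareto-dominating individually rational matching cannot move a student from an acceptable school to $\emptyset$; so the set of assigned students never shrinks during Step~2, and since it starts at size $M^{*}$ it stays there. Step~2 terminates because Pareto-domination is a strict partial order on the finite set $\mathcal M$. Thus $EAM(P)$ has size $M^{*}(P)$ for every $P$, and since $\phi(P)\in\mathcal M$ for any mechanism $\phi$, we always have $|\phi(P)|\le|EAM(P)|$, so no mechanism size-wise dominates $EAM$. For (c), I would invoke — and, for the present capacitated-with-outside-option setting, reprove — the lemma that an individually rational matching admitting no improving chain and no improving cycle is efficient, applied to the output $\tilde\mu$ (whose stopping condition is precisely this). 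For the contrapositive, given $\mu'$ Pareto-dominating $\tilde\mu$, put $B=\{i:\mu'_{i}P_{i}\tilde\mu_{i}\}\neq\emptyset$, note $\mu'_{i}\in S$ for $i\in B$ and that any student sitting under $\tilde\mu$ in a seat some $B$-student demands but that $\mu'$ vacates must itself lie in $B$; then trace $i\mapsto\mu'_{i}$ and onward to a current occupant of $\mu'_{i}$ under $\tilde\mu$, and by finiteness of $B$ either close an improving cycle among $B$-students or reach a school with a free seat in $\tilde\mu$ and obtain an improving chain. This contradicts the stopping rule of Step~2, so $\tilde\mu$ is efficient.

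I expect the real obstacle to be (a), and specifically the bookkeeping that each augmented/flipped matching lands back inside the \emph{particular} set $\xi^{k-1}$ (resp.\ $\xi^{n}$): the symmetric-difference combinatorics is standard, but the whole argument hinges on the structural fact that Step~1 only ever commits to a student when every surviving matching can honor that commitment, so flips dropping a ``committed'' student are automatically impossible. The efficiency lemma in (c) is routine in spirit but, because of the outside option and school capacities, should be stated and verified in this model rather than quoted verbatim from the house-allocation literature.
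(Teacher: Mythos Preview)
Your proof is correct. The efficiency portion (part~(c)) and the supporting lemma are essentially the paper's: both prove that a matching with no improving chain or cycle is efficient by tracing the set $B=\{i:\mu'_{i}P_{i}\tilde\mu_{i}\}$ until one either closes a cycle or hits a school with slack, and both then read efficiency off the stopping rule of Step~2. Your observation that maximality is not actually needed for the ``if'' direction of the lemma is a mild sharpening of the paper's statement, but inconsequential here since the Step-2 input is already maximal.

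Where you genuinely diverge is in the maximality argument for the Step-1 output. The paper argues directly: assuming a larger matching $\mu''$, it takes the \emph{first} agent $i_{k'}$ in the enumeration who is assigned in $\mu''$ but not in $\mu$, identifies $B(\mu,k')$ with the set of previously committed agents, and uses the failure of the $i_{k'}$-restriction to conclude (somewhat tersely) that assigning $i_{k'}$ forces an unassignment from $B(\mu,k')$, iterating over all such agents. Your route is the Berge/alternating-path one: blow up capacities to unit copies, show by induction on $k$ that $\xi^{k}$ always retains a maximum matching (via path-flipping that cannot drop a committed student because committed students are assigned in \emph{every} surviving matching, hence in the witness $\nu$), and then separately show that \emph{every} element of $\xi^{n}$ is maximum by exhibiting an augmenting path whose unmatched student endpoint contradicts the filtering rule. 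This buys you two things the paper's write-up leaves implicit: an explicit verification that the surgery lands back inside the correct $\xi^{k}$, and a clean separation between ``$\xi^{n}$ contains a maximum matching'' and ``$\xi^{n}$ contains only maximum matchings''---the latter being what is actually needed, since Step~1 permits an arbitrary selection from $\xi^{n}$. The paper's argument is shorter and closer in spirit to a greedy-on-a-transversal-matroid proof; yours is longer but more self-contained and leaves no counting step to the reader.
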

Notice, however, that while EAM mechanisms are maximal, they are not fair.

\subsection{A Class of Maximal and Fair for Unassigned Students Mechanisms}

We say that a matching $\mu$ is \textbf{fair for unassigned students}
if there is no student-school pair $(i,s)$ where $\mu_{i}=\emptyset$
and $i\succ_{s}j$ for some $j\in\mu_{s}$. A mechanism $\psi$ is
fair for unassigned students if, for any problem $P$, $\psi\left(P\right)$
is fair for unassigned students.

Below is a description of how each mechanism in this class works.
Given a problem $P$,

\textbf{Step 1.} Pick an $EAM$ mechanism $\psi$, and let $\psi\left(P\right)=\mu$.

\textbf{Step 2.}

In general,

\textbf{Sub-step 2.k.} Let $\tilde{\mu}$ be the matching obtained
in the previous step. If $\tilde{\mu}$ is fair for unassigned students,
the algorithm terminates with the outcome $\tilde{\mu}$. Otherwise,
pick a student-school pair $(i,s)$ such that $sP_{i}\emptyset$,
$\tilde{\mu}_{i}=\emptyset$, and $i\succ_{s}j$ for some $j\in\tilde{\mu}_{s}$.
Place student $i$ at school $s$, and let the lowest priority student
in $\tilde{\mu}_{s}$ be unassigned, while keeping everyone else's
assignment the same. Note that as in each sub-step the number of assigned
students is preserved, $\tilde{\mu}$ is maximal. Hence, we have $|\tilde{\mu}_{s}|=q_{s}$.
Let $\hat{\mu}$ be the obtained matching, and move to the next sub-step.

As, in every sub-step, a higher priority student is placed at a school
while a lower priority one is displaced from the school, and both
the students and schools are finite, the algorithm terminates in finitely
many rounds. The above procedure defines a class of mechanisms, each
of which is associated with different selections of the first stage
$EAM$ mechanism as well as the student-school pairs in the course
of Step $2$. We refer them as ``Fair Assignment
Maximizing'' ($FAM$) mechanisms.

\begin{thm}
\label{thm:FAMMaximalAndFair}Every $FAM$ mechanism is fair for unassigned
students and maximal.
\end{thm}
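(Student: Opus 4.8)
The plan is to establish three things about an arbitrary $FAM$ mechanism: that the procedure in Step~2 terminates, that its output is fair for unassigned students, and that its output is maximal. The second of these is immediate once termination is known, since Step~2 halts by construction exactly when the current matching is fair for unassigned students; so the real content lies in termination and in maximality.

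\emph{Termination.} I would exhibit a quantity that strictly increases at every sub-step. Fix the enumeration $i_1,\dots,i_n$ of $I$. For a matching $\tilde\mu$ and a school $s$, encode $\tilde\mu_s\subseteq I$ as a binary string of length $n$ whose positions are listed in $\succ_s$-order, highest priority first, and let $\mathrm{val}_s(\tilde\mu)$ be the integer this string represents in base~$2$ with the highest-priority position as the most significant digit; put $\Phi(\tilde\mu)=\sum_{s\in S}\mathrm{val}_s(\tilde\mu)$. A sub-step acts at a single school $s$: it deletes the $\succ_s$-lowest student $j\in\tilde\mu_s$ and inserts an unassigned student $i$ with $i\succ_s j$ (the chosen pair $(i,s)$ has $i$ of higher priority than some member of $\tilde\mu_s$, hence than its lowest-priority member $j$), leaving every other school's assignment untouched. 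Thus, in the string attached to $s$, the $1$ at $j$'s position becomes $0$ and a fresh $1$ appears at $i$'s strictly more significant position, so $\mathrm{val}_s$ strictly increases while all other $\mathrm{val}_{s'}$ are unchanged; hence $\Phi$ strictly increases. Since $\mathcal M$ is finite, $\Phi$ takes only finitely many values, so no run of Step~2 can be infinite.

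\emph{Maximality (and individual rationality).} I would show by induction along the sub-steps that every intermediate matching is individually rational and maximal. The base case is the Step~1 matching $\psi(P)$, which is individually rational and, by Theorem~\ref{thm:EAM is maximal and  efficient}, maximal. For the inductive step, let $\tilde\mu$ be individually rational and maximal and suppose a sub-step is applied at a pair $(i,s)$ with $\tilde\mu_i=\emptyset$ and $sP_i\emptyset$. If $|\tilde\mu_s|<q_s$, then the matching that assigns $i$ to $s$ and leaves all else fixed would be individually rational with $|\tilde\mu|+1$ assigned students, contradicting maximality of $\tilde\mu$; hence $|\tilde\mu_s|=q_s$. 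Therefore the sub-step's swap --- add $i$ to $s$, drop the $\succ_s$-lowest $j\in\tilde\mu_s$ --- keeps $s$ filled to $q_s$ and alters no other school, so the new matching $\hat\mu$ satisfies $|\hat\mu|=|\tilde\mu|$, and it is individually rational because $i$ is placed at an acceptable school, $j$ is unassigned, and everyone else is unchanged. Hence $\hat\mu$ is individually rational and of maximum cardinality, i.e.\ maximal. Combining with termination, the $FAM$ output is an individually rational matching with the same cardinality as $\psi(P)$, hence maximal, and it is fair for unassigned students because Step~2 stopped.

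\emph{Main obstacle.} The only genuinely delicate point is ruling out cycling in Step~2: a student displaced at one sub-step may be reassigned and displaced again later, so the informal ``higher priority in, lower priority out'' remark does not on its own bound the number of sub-steps. The potential $\Phi$ is designed precisely so that each swap is strictly monotone at exactly one school while freezing the others, which closes that gap; the remaining bookkeeping (individual rationality, the size-preservation identity $|\hat\mu|=|\tilde\mu|$, and the fairness conclusion) is routine.
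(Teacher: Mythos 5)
Your proof is correct and follows essentially the same route as the paper's: the Step-1 ($EAM$) outcome is maximal, each Step-2 swap preserves individual rationality and cardinality (via the observation that maximality forces $|\tilde{\mu}_{s}|=q_{s}$), and fairness for unassigned students follows from the stopping rule. The only difference is that you formalize termination with the per-school binary potential $\Phi$, whereas the paper asserts termination informally in the mechanism's description; your potential function is a legitimate tightening of that step rather than a different argument.
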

\begin{proof}
Let $\psi$ be a $FAM$ mechanism, and $\mu$ be the outcome of its
first step. As $\mu$ is the outcome of an $EAM$ mechanism, and in
Step $2$ of $\psi$, no student is assigned to one of his unacceptable
choices, $\psi$ is individually rational. Because $\mu$ is maximal
and the number of assigned students is preserved as $\left|\mu\right|$
in the course of Step $2$, $\psi$ is maximal. Moreover, as $\psi$
does not stop until no student-school pair violates fairness for unassigned
students, $\psi$ is fair for unassigned students as well.
\end{proof}

\section{Incentives and Equilibrium Analysis\label{sec:Incentives-and-equilibrium}}

In this section we show that the mechanisms in the classes
$EAM$ and $FAM$ have surprisingly regular properties in terms of
equilibrium outcomes. Consider the preference reporting game
induced by a mechanism $\psi$. At problem $P$, a preference submission
$P'=(P'_{i})_{i\in I}$ is a (Nash) \textbf{equilibrium} of $\psi$
if for every student $i$, $\psi_{i}\left(P'\right)R_{i}\psi_{i}(P''_{i},P'_{-i})$
for any $P''_{i}\in\mathcal{P}$. Let $\Omega$ be the set of mechanisms
that admit an equilibrium in any problem $P\in\mathcal{P}^{|I|}$.
In the rest of this section, we consider only the mechanisms in $\Omega$.

\begin{prop}
\label{prop:AMM-Equilibrium}Every $EAM$ and $FAM$ mechanism is
in $\Omega$. Moreover, for any problem, an $EAM$ mechanism has a
unique equilibrium outcome that is equivalent to the outcome of the
serial dictatorship where the student ordering is the same as that
used in that $EAM$ mechanism.
\end{prop}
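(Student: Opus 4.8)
The plan is to break the proposition into two parts. First I would show that every $EAM$ and $FAM$ mechanism admits an equilibrium for every problem $P$ --- in fact, I would exhibit a specific one. Second, I would characterize the equilibrium outcomes of an $EAM$ mechanism as exactly the outcome of the serial dictatorship $SD$ using the same ordering $(i_1,\dots,i_n)$.

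For the existence part, the natural candidate equilibrium is the profile $P'$ in which each student $i$ truncates their true preference by reporting only their $SD$-assignment (relative to the $EAM$/$FAM$ ordering) as acceptable --- i.e., if $\sigma_i$ is the school that $SD$ with ordering $(i_1,\dots,i_n)$ assigns to $i$ under the true $P$, then $P'_i$ lists only $\sigma_i$ as acceptable (or is simply a reporting of $\emptyset$ if $\sigma_i=\emptyset$). One should check that under such a "degenerate" profile, any maximal individually rational matching --- and hence any $EAM$ or $FAM$ outcome --- must assign each $i$ with $\sigma_i \neq \emptyset$ exactly to $\sigma_i$: the set of students with nonempty $\sigma_i$, each demanding only their single school, forms a system that is already "tight," so no individually rational matching can do better than matching all of them, and $SD$'s outcome witnesses that this is achievable; by maximality the mechanism must achieve it, and given the singleton acceptable sets the only way to do so is the $SD$ matching itself. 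Then one verifies no student can profitably deviate: a deviation by $i$ can only change $i$'s own assignment, and since the other students still demand exactly their $\sigma_j$'s, the only schools with slack available to $i$ are those $i$ already (weakly) ranks below $\sigma_i$ under $P_i$ --- here I would invoke the standard fact that $SD$ is strategy-proof and the observation that with everyone else's reports fixed at singletons, $i$ faces essentially the residual $SD$ problem. This shows $P' $ is an equilibrium, so $EAM, FAM \in \Omega$.

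For the uniqueness of the $EAM$ equilibrium outcome, I would argue in two directions. ($\subseteq$) Take any equilibrium $P'$ of the $EAM$ mechanism $\psi$ at true profile $P$, and let $\mu = \psi(P')$. I would show $\mu$ coincides with the $SD$ outcome under $P$ by induction on the ordering: student $i_1$, being able to report any single school as the only acceptable one, can guarantee assignment to any school $s$ with $q_s \ge 1$ (since $\psi$ is maximal, a profile where $i_1$ demands only $s$ forces $i_1$ to get $s$); hence in equilibrium $i_1$ must receive his true $P$-favorite school among all schools, which is exactly his $SD$ assignment. Inductively, given that $i_1,\dots,i_{k-1}$ receive their $SD$ assignments in $\mu$, I would argue $i_k$ can deviate to demand only his best remaining school and, by maximality together with the structure of $EAM$ (which prioritizes assigning earlier students), obtain it --- so in equilibrium $i_k$ gets his $SD$ assignment too. ($\supseteq$) Conversely the profile $P'$ constructed in the first part is an equilibrium whose outcome is the $SD$ matching, so that outcome is attained. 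Combining, every equilibrium outcome equals the $SD$ outcome. The argument for $i_k$ being able to \emph{secure} his best remaining school is where I expect the real subtlety: one must use the precise $EAM$ construction (the lexicographic "assign $i_1$ if possible, then $i_2$, \dots" selection in Step 1, preserved by improving chains/cycles in Step 2) to guarantee that when the others report the singletons corresponding to their already-determined $SD$ assignments, any maximal matching respecting those demands gives $i_k$ exactly his residual $SD$ school and not something worse --- ruling out the possibility that maximality is achieved by sacrificing $i_k$'s assignment in favor of some later student. That step, reconciling "the mechanism is maximal" with "the mechanism favors $i_k$ over $i_{k+1}, \dots$," is the main obstacle; the rest is bookkeeping with truncation strategies and the strategy-proofness of $SD$.
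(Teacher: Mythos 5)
Your treatment of the $EAM$ part (existence via the profile in which each student reports only his $SD$ assignment as acceptable, and uniqueness of the equilibrium outcome by induction along the ordering) is essentially the paper's argument, and the subtlety you flag about securing the residual $SD$ school is real but handled correctly by your appeal to the lexicographic structure of Step 1 together with the fact that singleton reporters never move in Step 2. The genuine gap is in the $FAM$ existence claim. You propose the same $SD$-singleton profile as an equilibrium of $FAM$, and your verification rests on the assertion that ``the only schools with slack available to $i$ are those $i$ already ranks weakly below $\sigma_i$.'' That reasoning ignores Step 2 of $FAM$: a deviating student can obtain a seat at a \emph{full} school by triggering the fairness-for-unassigned-students correction, which displaces a lower-priority occupant. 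Concretely, take $I=\{i,j\}$, one school $a$ with $q_a=1$, $P_i:\ a,\emptyset$, $P_j:\ a,\emptyset$, $i\succ_a j$, and a $FAM$ whose first-stage $EAM$ ordering puts $j$ first. Then $\sigma_j=a$ and $\sigma_i=\emptyset$, so your candidate profile has $i$ reporting nothing acceptable and the outcome assigns $j$ to $a$. But if $i$ deviates and reports $a$ acceptable, the first stage still gives $a$ to $j$, and Step 2 then detects the pair $(i,a)$ with $i\succ_a j$ and reassigns $a$ to $i$ --- a strictly profitable deviation. So your profile is not an equilibrium of $FAM$.

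The paper avoids this by choosing a different candidate for $FAM$: each student reports as his unique acceptable school his assignment under a \emph{stable} matching $\mu$ (e.g.\ the $DA$ outcome), with students unassigned at $\mu$ reporting nothing acceptable. The no-deviation argument then uses stability essentially: if $i$ could deviate and obtain some $s$ with $sP_i\mu_i$, then $|\mu_s|=q_s$ and some $j\in\mu_s$ with $j\succ_s i$ (by fairness of $\mu$) would be left unassigned while finding $s$ acceptable, contradicting fairness for unassigned students of the $FAM$ outcome. Your proposal is missing exactly this interplay between the stability of the target matching and the Step 2 priority swaps; without it, the $FAM\in\Omega$ claim is unproven.
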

Proposition \ref{prop:AMM-Equilibrium} shows, therefore, that equilibrium
outcomes of $EAM$ are not only Pareto efficient, but will match as
many students as a commonly used strategy-proof mechanism.

Our next question is how mechanisms compare, in terms of the number
of assignments, in equilibrium. For that, we define the concept of
\textbf{size-wise domination in equilibrium. }
\begin{defn}
For a given market $\left(I,S,\succ,q\right)$, a mechanism $\psi$
\textbf{size-wise dominates} another mechanism $\phi$ \textbf{in
equilibrium} if, for any problem $P$ and for every equilibria $P',P''$
under $\psi$ and $\phi$, respectively $\left|\psi\left(P'\right)\right|\geq\left|\phi\left(P''\right)\right|$,
and there exists a problem $P^{*}$ such that for every equilibria
$\hat{P},\tilde{P}$ under $\psi$ and $\phi$, respectively $|\psi(\hat{P})|>|\phi(\tilde{P})|$.
\end{defn}

\begin{thm}
\label{thm:EAMNotDominatedInEq}In any market $\left(I,S,\succ,q\right)$,
no $EAM$ mechanism is size-wise dominated by an individually rational
mechanism in equilibrium.
\end{thm}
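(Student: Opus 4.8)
The plan is to fix a market and an $EAM$ mechanism $\psi$, and argue by contradiction: suppose some individually rational mechanism $\phi$ size-wise dominates $\psi$ in equilibrium. By Proposition~\ref{prop:AMM-Equilibrium}, $\psi$ has, at every problem $P$, a \emph{unique} equilibrium outcome, and that outcome coincides with the outcome of serial dictatorship $SD$ under the fixed student ordering $(i_1,\dots,i_n)$ of $\psi$. So the whole statement reduces to a claim about $SD$: we must show that there is no problem $P^{*}$ at which every equilibrium of $\phi$ produces strictly more assignments than $SD(P^{*})$ does, without $\phi$ ever doing strictly worse than $\psi$ at some other problem. Equivalently, since $\psi$'s equilibrium size equals $|SD(P)|$ for all $P$, it suffices to exhibit, for every individually rational $\phi$, a problem $P$ where $\phi$ has an equilibrium $P'$ with $|\phi(P')|\le |SD(P)|$ — in fact it is enough to show $\phi$ can never strictly beat $SD$'s equilibrium size everywhere it beats $\psi$ somewhere, which I will get by producing, at the candidate witness problem, an equilibrium of $\phi$ whose size is at most that of $SD$ at the \emph{true} profile.

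The key step is the following observation about $SD$: at a problem $P$ where $SD(P)$ leaves some students unassigned, the set of \emph{assigned} students under $SD(P)$ consists exactly of those students who, when it is their turn in the ordering, still have an acceptable school with a remaining seat. I would use this to construct, from any problem $P^{*}$ where $\phi$ allegedly outperforms $SD$, an auxiliary ``truncated'' problem $\bar P$ in which each student whom $\phi$ assigns above what $SD$ would assign reports only the single school $\phi$ gives her as acceptable (a truncation of her true list after her $\phi$-assignment). At $\bar P$ the serial-dictatorship outcome $SD(\bar P)$ still has size $|SD(P^{*})|$ at most — truncating preferences weakly shrinks the $SD$ matching — so $\psi$'s equilibrium size at $\bar P$ is $\le |SD(P^{*})|$. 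The heart of the argument is then to show that the profile $\bar P$ (or a closely related profile built from $\phi(P^{*})$) is itself an equilibrium of $\phi$: any deviation by a student either cannot improve her — she is already getting an acceptable school or she truly has no acceptable school available given what others report — or, if it could, it would mean $\phi(P^{*})$ was not a best response at $P^{*}$ in the first place, contradicting that $P^{*}$ was chosen as a point where $\phi$ does well.

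The main obstacle I anticipate is precisely this last point: $\phi$ is an \emph{arbitrary} individually rational mechanism, so I have essentially no structure to exploit when checking that the constructed profile is a Nash equilibrium of $\phi$ — I cannot appeal to monotonicity, consistency, or any response to truncation. The clean way around this is to run the argument in the contrapositive using the definition of size-wise domination in equilibrium directly: I must block the existence of the witness problem $P^{*}$, and for that it is enough to find, \emph{for each} individually rational $\phi$, a single problem where $\phi$ has an equilibrium matching no larger than $\psi$'s equilibrium matching — and the natural candidate is a problem where $|SD|$ is already maximal (e.g. every student finds only schools with ample capacity acceptable, or a one-student-one-school market as in Proposition~\ref{pop: well-known mechanisms are not maxima}), so that \emph{every} individually rational matching, hence $\phi$ at any equilibrium, has size $\le |SD(P)| = |\psi$'s equilibrium$|$. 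That kills the required strict inequality for $\phi$ and completes the contradiction. I would present the formal write-up in that order: invoke Proposition~\ref{prop:AMM-Equilibrium} to replace $\psi$'s equilibrium outcome by $SD$; observe $SD(P)$ is maximal on the chosen family of problems; conclude no individually rational $\phi$ can size-wise dominate $\psi$ in equilibrium.
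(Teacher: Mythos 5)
Your closing argument does not prove the theorem, because it misreads what has to be negated. By the paper's definition, $\phi$ size-wise dominates $\psi$ in equilibrium if (i) at \emph{every} problem and every pair of equilibria $\left|\phi\right|\geq\left|\psi\right|$, and (ii) there \emph{exists} some problem $P^{*}$ at which $\left|\phi\right|>\left|\psi\right|$ for every pair of equilibria. To refute this for an arbitrary individually rational $\phi$ you must either produce a problem with equilibria at which $\left|\phi\right|<\left|\psi\right|$, or show that \emph{no} problem can serve as the witness in (ii). Exhibiting one convenient problem where $SD$ is already maximal (so that no individually rational matching, hence no equilibrium outcome of $\phi$, can exceed $\left|SD(P)\right|$ there) only shows that \emph{that} problem is not a witness; it leaves entirely open that some other problem $P^{*}$ is, which is all that clause (ii) requires. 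So the ``clean way around'' with which you end the proof establishes nothing. The part of your sketch that was on the right track --- take an arbitrary alleged witness $P^{*}$, truncate preferences, and argue that a suitable profile is an equilibrium of $\phi$ --- is exactly the step you declare yourself unable to complete, and it is the step the paper actually carries out.

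For comparison, the paper resolves the obstacle you correctly identify (an arbitrary $\phi$ gives you no structure with which to verify equilibrium) as follows. It first proves a lemma: whenever $\left|\psi(P')\right|<\left|\phi(P'')\right|$ at equilibria, some student $i$ satisfies $\psi_{i}(P')P_{i}\phi_{i}(P'')P_{i}\emptyset$; this uses the fact that $\psi$'s unique equilibrium outcome is $SD(P)$ together with non-wastefulness of $SD$ at a school that $\phi$ fills but $SD$ leaves with excess capacity. It then iteratively modifies the \emph{true} problem (not merely the reported profile) by truncating such a student's preference list below her $SD$ assignment --- which leaves $SD$, and hence $\psi$'s equilibrium outcome, unchanged --- until no such student remains. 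At the resulting problem $\hat{P}$ the lemma forces $\left|\psi\right|\geq\left|\phi\right|$ at all equilibria, and because each truncation preserves relative rankings, an equilibrium of $\phi$ at $\hat{P}$ is shown to be an equilibrium of $\phi$ at the original $P^{*}$ as well, yielding an equilibrium pair at $P^{*}$ with $\left|\psi\right|\geq\left|\phi\right|$ and contradicting that $P^{*}$ was a witness. Without some such transfer-of-equilibria argument across problems, the theorem cannot be obtained for an arbitrary individually rational $\phi$.
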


While we do not have a similar result to above for the $FAM$ mechanisms,
we are able to compare the number of assigned students under the $FAM$
in equilibrium and the weakly dominant strategy equilibrium of the
$DA$, which is truth-telling.
\begin{thm}
\label{thm:FAMM-equilibriumNumber} Regarding the $FAM$ mechanisms:
\begin{itemize}
\item[(i)] For any problem $P$ and any stable matching for $P$ $\mu^{*}$,
for every equilibrium $P'$ of a $FAM$ mechanism $\psi$, $\left|\psi\left(P'\right)\right|\geq\left|\mu^{*}\right|$.
\item[(ii)] There exist a $FAM$ mechanism $\psi$, problem $P$, and an equilibrium
profile $P'$ of $\psi$ at $P$ such that $\left|\psi\left(P'\right)\right|>\left|\mu^{**}\right|$,
where $\mu^{**}$ is any stable matching for $P$.
\end{itemize}
\end{thm}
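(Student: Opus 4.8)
The plan is to establish (i) by combining a combinatorial augmenting-path argument with a single-student deviation, and to establish (ii) by an explicit instance. For (i), the first point is that in any equilibrium $P'$ of a $FAM$ mechanism $\psi$ the outcome $\mu=\psi(P')$ is individually rational with respect to the \emph{true} profile $P$: submitting the empty list is always available, and since $FAM$ is individually rational for the submitted profile it yields $\emptyset$, so $\mu_i=\psi_i(P')\,R_i\,\emptyset$ for every $i$. Now suppose, towards a contradiction, that $|\mu|<|\mu^*|$ for some stable matching $\mu^*$. Viewing $\mu$ and $\mu^*$ as matchings in the bipartite graph of students versus school-seats, $|\mu^*|>|\mu|$ together with Berge's augmenting-path lemma \citep{Berge842} produces a $\mu$-augmenting path $(i_0,s_1,i_1,s_2,\dots,i_{\ell-1},s_\ell)$ whose forward edges lie in $\mu^*$. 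Hence $i_0$ is unmatched under $\mu$, $\mu_{i_k}=s_k$ and $\mu^*_{i_k}=s_{k+1}$ for $1\le k\le\ell-1$, $\mu^*_{i_0}=s_1$, the seat at $s_\ell$ is vacant under $\mu$, and --- since $\mu^*$ is individually rational for $P$ --- every school on the path is acceptable under $P$ to the student adjacent to it; taking a shortest such path makes $s_1,\dots,s_{\ell-1}$ full under $\mu$.

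Next I would apply the equilibrium condition to $i_0$. Let $i_0$ deviate to the list ``$s_1,\emptyset$'' and let $\nu$ be the resulting $FAM$ outcome. Individual rationality gives $\nu_{i_0}\in\{s_1,\emptyset\}$; if $\nu_{i_0}=s_1$ then, since $s_1\,P_{i_0}\,\emptyset=\mu_{i_0}$, this is a profitable deviation, contradicting that $P'$ is an equilibrium. So $\nu_{i_0}=\emptyset$, and because $FAM$ is maximal and fair for unassigned students, $s_1$ must be full in $\nu$ with all its occupants ranked above $i_0$ by $\succ_{s_1}$. When $\ell=1$ this is impossible: $\mu\cup\{i_0\mapsto s_1\}$ is individually rational for the deviated profile and matches $|\mu|+1$ students, so $|\nu|\ge|\mu|+1$; but then $\nu$ restricted to $I\setminus\{i_0\}$ and completed by $i_0\mapsto\emptyset$ is individually rational for $P'$ and matches more students than $\psi(P')$, contradicting maximality of $\psi$ at $P'$. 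The main obstacle is the case $\ell\ge2$, where $s_1$ is full under $\mu$, so the displayed deviation yields neither conclusion because $FAM$ applied to the deviated profile may re-arrange the occupants of $s_1$. The plan here is an induction that walks inward along the path: using fairness for unassigned students of $\mu$ (with respect to $P'$) and the analogous one-shot deviations for $i_1,i_2,\dots$, one propagates the priority comparisons $i_1\succ_{s_1}i_0$, $i_2\succ_{s_2}i_1,\dots$ and the ``full'' status of $s_1,\dots,s_{\ell-1}$ down to the vacant seat at $s_\ell$, reducing matters to the $\ell=1$ analysis. Making this propagation rigorous --- reconciling the true preferences that justify the augmenting path with the possibly different \emph{submitted} preferences of the intermediate students --- is the crux of the argument.

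For (ii), the plan is to exhibit a small instance $P$, a specific $FAM$ mechanism $\psi$ (a specific enumeration of students for the $EAM$ stage and a specific rule for selecting repair pairs), and an equilibrium profile $P'$ of $\psi$ at $P$ with $|\psi(P')|$ strictly larger than the common size of the stable matchings of $P$; recall that, classically, all stable matchings of a problem match the same set of students, so it suffices to compare with $DA(P)$. One chooses $P$ so that a rejection chain under $DA$ wastes a seat --- so that some individually rational matching matches one more student than every stable matching, with the extra student brought in by a ``rotation'' that moves another student to a less-preferred school --- and then verifies the equilibrium condition deviation by deviation. The main difficulty, and the reason the instance must be engineered carefully, is the rotation student's temptation to deviate and ``grab'' her more-preferred school, which would destroy the larger matching; the priorities must be arranged so that after any such deviation the chain of fairness-for-unassigned repairs refills her target school with strictly higher-priority students, leaving her unmatched or no better off, so that no profitable deviation remains. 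A handful of students --- including a high-priority ``blocker'' at the contested school who has no incentive to move --- should be enough.
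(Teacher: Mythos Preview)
For part~(i), your $\ell=1$ argument is exactly the paper's proof, but the paper never confronts the case $\ell\ge 2$. Instead of setting up a general augmenting path, the paper first establishes, via a one-line counting claim, that a witness of the $\ell=1$ type always exists --- namely a student $i$ with $\mu^{*}_{i}=s$, $\psi_{i}(P')=\emptyset$, and $|\psi_{s}(P')|<q_{s}$ --- and then runs precisely your $\ell=1$ deviation for that $i$. Thus the augmenting-path machinery you introduce (Berge's lemma, the chain $i_{0},s_{1},\dots,s_{\ell}$, and the proposed inductive walk) does not appear in the paper at all. Your sketch for $\ell\ge 2$ is a genuine gap as written: the intermediate students $i_{1},\dots,i_{\ell-1}$ are \emph{matched} under $\psi(P')$, so neither ``fairness for unassigned students'' nor the natural single-school deviation applies to them, and you give no concrete way to transport the true-preference structure of the path to the submitted profile $P'$ or to ``reduce to the $\ell=1$ analysis.'' It is also worth noting that the paper's argument for~(i) uses only maximality of $\psi$ and the equilibrium hypothesis for the single deviating student; fairness for unassigned students is not invoked at all in that part, whereas your plan leans on it heavily.

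For part~(ii), your plan coincides with the paper's approach. The paper exhibits an explicit instance with four students and three unit-capacity schools, fixes a particular $FAM$ by specifying the student ordering and the selections made along the algorithm, and then verifies deviation by deviation that truth-telling is an equilibrium under which three students are matched while every stable matching (equivalently, $DA$) matches only two. The features you anticipate --- a seat wasted by a $DA$ rejection chain, a ``rotation'' student moved to a less-preferred school to bring in an extra match, and priorities arranged so that any attempt by that student to grab her preferred school is defeated by higher-priority occupants --- are exactly the ingredients of the paper's construction.
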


One may interpret the results in this section as an indication that there isn't much gain in using maximal mechanisms such as EAM and FAM, since when agents respond to their incentives, outcomes are similar to those produced by other non-maximal mechanisms. Below we show, however, that there are improvements in terms of the cardinality of the matching, as long as some fraction of the students are sincere.

\begin{prop}\label{prop:FractionOfTruthTelling}
For any maximal mechanism $\psi$, problem $P$, and student $i$ with false preferences $P'_i$ such that $\psi_i(P'_i,P_{-i}) P_i \psi_i(P)$, we have $|\psi(P)| \geq |\psi(P'_i,P_{-i})|$. Moreover, there exist a problem $\tilde{P}$ and student $i$ with false preferences $\bar{P}_i$ such that $\psi_i(\bar{P}_i,\tilde{P}_{-i}) \tilde{P}_i \psi_i(\tilde{P})$ and $|\psi(\tilde{P})| > |\psi(\bar{P}_i,\tilde{P}_i)|$. 
\end{prop}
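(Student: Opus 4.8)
The plan is to split the statement into its two halves and handle each directly using only the defining property of maximality. For the first half, fix a maximal mechanism $\psi$, a problem $P$, and a student $i$ with $P'_i$ such that $\psi_i(P'_i,P_{-i}) \mathrel{P_i} \psi_i(P)$. The key observation is that $\psi(P'_i,P_{-i})$ is, in particular, an individually rational matching \emph{for the true problem $P$}: student $i$ receives a school he strictly prefers to $\psi_i(P)$, hence strictly prefers to $\emptyset$ (since $\psi_i(P)\mathrel{R_i}\emptyset$ by individual rationality of $\psi$ at $P$), and every other student $j\neq i$ reports truthfully in both problems, so $\psi_j(P'_i,P_{-i})\mathrel{R_j}\emptyset$ under $P_j$ as well. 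Therefore $\psi(P'_i,P_{-i})\in\mathcal{M}$ where $\mathcal{M}$ is the set of (individually rational) matchings for $P$. Since $\psi(P)$ is maximal — it is not size-wise dominated by any matching in $\mathcal{M}$ — we cannot have $|\psi(P'_i,P_{-i})| > |\psi(P)|$, i.e. $|\psi(P)|\geq|\psi(P'_i,P_{-i})|$, which is the claimed inequality.

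For the second half I need an explicit example exhibiting strict inequality, and the natural candidate is essentially the instance from the proof of Proposition~\ref{pop: well-known mechanisms are not maxima}, enriched so that the manipulation is visible. Take $\tilde I=\{i_1,i_2\}$, $S=\{a,b\}$ with unit quotas, true preferences $\tilde P_{i_1}:\ a,b,\emptyset$ and $\tilde P_{i_2}:\ a,\emptyset$, with $i_1$ having top priority at $a$. A maximal mechanism must, at $\tilde P$, produce a matching of size $2$, and the only such individually rational matching is $\mu'$ with $\mu'_{i_1}=b$, $\mu'_{i_2}=a$; so $\psi_{i_1}(\tilde P)=b$. Now let $i_1$ misreport $\bar P_{i_1}:\ a,\emptyset$ (declaring $b$ unacceptable). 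At $(\bar P_{i_1},\tilde P_{-i_1})$ the only individually rational matching that assigns $i_1$ gives him $a$; the matchings of maximum cardinality assign exactly one student, so $\psi(\bar P_{i_1},\tilde P_{i_1})$ has size $1$ and, by individual rationality together with $i_1$'s top priority at $a$ and fairness-type considerations, $\psi_{i_1}(\bar P_{i_1},\tilde P_{-i_1})=a$. Then $\psi_{i_1}(\bar P_{i_1},\tilde P_{-i_1})=a \mathrel{\tilde P_{i_1}} b = \psi_{i_1}(\tilde P)$, so the misreport is profitable, while $|\psi(\tilde P)|=2 > 1 = |\psi(\bar P_{i_1},\tilde P_{i_1})|$, as required.

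The one subtlety — and the main thing to get right — is that "maximal mechanism" is a weak requirement (it only forbids being size-wise dominated by \emph{another mechanism}, globally), so a priori a maximal $\psi$ need not return a maximum-cardinality matching at every single problem. I would therefore first record the easy fact that a maximal mechanism does in fact produce a maximum-cardinality (individually rational) matching at every problem: if it returned a non-maximum matching at some $P^\circ$, one could strictly improve it there by swapping in any maximum matching for $P^\circ$ while leaving all other problems untouched, contradicting maximality. (This is immediate from the definitions of size-wise domination of mechanisms and of maximal matchings, together with the standing convention that we only consider individually rational matchings.) With that fact in hand both halves go through cleanly: the first half uses that $\psi(P)$ has maximum cardinality among individually rational matchings for $P$ and that $\psi(P'_i,P_{-i})$ is one such matching; the second half uses it to pin down that $\psi(\tilde P)$ has size $2$. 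I expect the write-up of the example to require only a line or two of case-checking to confirm the value $\psi_{i_1}(\bar P_{i_1},\tilde P_{-i_1})=a$; no real obstacle is anticipated.
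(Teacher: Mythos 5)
Your first half is correct and is essentially the paper's argument: the manipulated outcome $\psi(P'_i,P_{-i})$ is individually rational at the true problem $P$ (because $i$ strictly gains and everyone else's report is unchanged), and a maximal mechanism must select a maximum-cardinality individually rational matching at every problem (your preliminary ``easy fact'' is exactly the right reading of the definition, since otherwise one could modify $\psi$ at a single problem and size-wise dominate it). So $|\psi(P)|\geq|\psi(P'_i,P_{-i})|$ follows.

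The second half has a genuine gap. You pin down $\psi_{i_1}(\bar P_{i_1},\tilde P_{-i_1})=a$ by appealing to ``$i_1$'s top priority at $a$ and fairness-type considerations,'' but a maximal mechanism is not assumed to respect priorities or any fairness property, so nothing forces this. At the profile $(\bar P_{i_1},\tilde P_{-i_1})$ both students report only $a$ acceptable, every maximum individually rational matching has size one, and the mechanism is free to give $a$ to $i_2$; in that case $i_1$'s deviation yields $\emptyset$, which is strictly worse than $b$, and your example produces no profitable manipulation at all. Since the proposition quantifies over \emph{every} maximal mechanism, the example must work for each one. The repair is the relabeling the paper performs: start from the symmetric profile in which both students report only $a$ acceptable, let (without loss of generality) $i$ denote whichever student the mechanism assigns to $a$ there, and only then define the true problem $\tilde P$ so that this student $i$ is the one whose true preferences are $a,b,\emptyset$. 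With that choice the deviation to $a,\emptyset$ is profitable for $i$ and drops the matching size from $2$ to $1$. Your example is otherwise the right one; only the justification of who receives $a$ under the misreport needs to be replaced by this symmetry argument.
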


In a preference-reporting game induced by a maximal mechanism where the only active players are strategic students in the sense that the rest is always sincere,  Proposition \ref{prop:FractionOfTruthTelling} leads to the following corollary.

\begin{cor}\label{cor:FractionOfTruthTelling}
Under any maximal mechanism, as the set of sincere students increases, in any problem, the number of students matched in equilibrium either stays the same or increases.
\end{cor}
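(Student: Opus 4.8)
The corollary is most naturally read in the constructive form: when the sincere set grows, every equilibrium of the ``more strategic'' game is matched or beaten in cardinality by some equilibrium of the ``more sincere'' game. Since the two sincere sets differ in finitely many students, it suffices to prove this when exactly one student changes status, and then iterate. Fix the market and the true problem $P$, and for a set $T\subseteq I$ of strategic students write $G_{T}$ for the game in which the students in $I\setminus T$ submit $P$ and those in $T$ submit reports freely. Two structural facts carry the argument. (a) For a maximal mechanism $\psi$ and \emph{any} report profile $Q$, the matching $\psi(Q)$ has maximum cardinality among all individually rational matchings for $Q$; otherwise a mechanism that agrees with $\psi$ except at $Q$, where it returns a larger individually rational matching, would size-wise dominate $\psi$. (b) In any equilibrium $P'$ of any $G_{T}$ no student is assigned a school unacceptable under her true preference --- a strategic student can always report ``everything unacceptable'' and secure $\emptyset$, and sincere students report truthfully --- so $\psi(P')$ is individually rational for $P$, and in particular $|\psi(P')|\le|\psi(P)|$.

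For the one-student step, fix $T$, a student $j\in T$, and set $T'=T\setminus\{j\}$, so $j$ becomes sincere. Let $P^{*}$ be an equilibrium of $G_{T}$ and let $\bar P$ be the profile of $G_{T'}$ obtained from $P^{*}$ by replacing $P^{*}_{j}$ with the truthful $P_{j}$. Since $\bar P$ and $P^{*}$ differ only in $j$'s coordinate and $\psi(P^{*})$ is individually rational for $P$ by (b), $\psi(P^{*})$ is individually rational for $\bar P$ as well, so (a) applied at $\bar P$ gives $|\psi(\bar P)|\ge|\psi(P^{*})|$: ``un-deviating'' the newly sincere student weakly enlarges the matching. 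One then lets the students in $T'$ best-respond in some order starting from $\bar P$ until an equilibrium $\hat P$ of $G_{T'}$ is reached, with the target $|\psi(\hat P)|\ge|\psi(P^{*})|$. The tool is the observation, already contained in the proof of Proposition \ref{prop:FractionOfTruthTelling}, that a single beneficial deviation weakly shrinks the matching \emph{provided} the pre-deviation matching is individually rational for $P$ and the deviator's new assignment is acceptable to her true type, so that the post-deviation matching is still individually rational for $P$ and hence no larger than the pre-deviation matching (which, being an output of $\psi$, has maximum such cardinality). Iterating the one-student step over the students in the symmetric difference of the two sincere sets then proves the monotonicity claim, and the second half of Proposition \ref{prop:FractionOfTruthTelling} furnishes a problem where the increase is strict.

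The delicate point --- and the one I expect to be the main obstacle --- is the re-stabilization from $\bar P$ to $\hat P$. Proposition \ref{prop:FractionOfTruthTelling} is stated for a deviation from the truthful profile, whereas in $\bar P$ the students in $T'$ already misreport, so their re-optimizations cannot be fed into the observation above without modification: a re-optimizing student's new assignment need not be acceptable to her true type, and an intermediate matching need not be individually rational for $P$, so the cardinality could in principle overshoot below $|\psi(P^{*})|$. Making the reduction rigorous therefore requires either ordering the re-optimizations carefully (and possibly resetting some reports to truthful temporarily) so that the hypotheses of the strengthened proposition hold at every step while the process still converges, or else bypassing dynamics entirely by showing directly that $G_{T'}$ admits \emph{some} equilibrium of cardinality at least $|\psi(\bar P)|$. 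Establishing that existence statement --- as opposed to the mere existence of an equilibrium --- is where the real work lies.
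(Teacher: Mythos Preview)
The paper gives no separate proof of the corollary: it simply asserts that Proposition~\ref{prop:FractionOfTruthTelling} ``leads to'' it and stops there. Your write-up is therefore considerably more careful than what the paper itself offers, and you are right that the inference is not automatic.

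Your structural facts (a) and (b) are correct, and your one-student ``un-deviation'' step --- that replacing the newly sincere student's strategic report by her true preference weakly enlarges the matching --- is a sound application of maximality together with the observation that an equilibrium outcome must be individually rational for the true profile $P$. You also put your finger exactly on the missing piece: the profile $\bar P$ obtained after un-deviating $j$ is in general \emph{not} an equilibrium of $G_{T'}$, and there is no mechanism-independent reason why subsequent best responses by the remaining strategic students should (i) converge or (ii) preserve the cardinality bound. Proposition~\ref{prop:FractionOfTruthTelling} controls only profitable deviations away from the fully truthful profile, so it cannot be invoked along a best-response path where other players already misreport; a re-optimizing student's new assignment need not be acceptable to her true type, exactly as you note.

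In short, you have not supplied a complete proof, but you have correctly diagnosed that the corollary, read literally as a comparison between equilibrium sets of two nested games, requires more than Proposition~\ref{prop:FractionOfTruthTelling} delivers. The paper's own treatment is heuristic and contains precisely the gap you identify; your analysis exposes it rather than introducing a new one.
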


\newpage 

\bibliographystyle{aer}
\bibliography{amm}

\newpage

\section*{Appendix}

\subsection*{Proofs}

\subsubsection*{Theorem \ref{thm:EAM is maximal and  efficient}}

We will use the following Lemma:
\begin{lem*}
A maximal matching $\mu$ is efficient if and only if it does not
admit an improving chain or cycle.
\end{lem*}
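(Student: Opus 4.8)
The plan is to prove the two implications separately, both in contrapositive form; the reverse implication (``no improving chain or cycle $\Rightarrow$ efficient'') carries essentially all of the work.

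For the forward implication, suppose $\mu$ admits an improving chain or an improving cycle. Reassign each student occurring in it to the school it points to and leave everyone else fixed, obtaining a matching $\nu$. Along a cycle the reassignment permutes the occupied seats among the participating students, and along a chain it shifts each participant one seat toward the vacancy, so in both cases no capacity is violated and $\nu$ is a well-defined matching; moreover each participant moves to a school it strictly prefers to its current (hence acceptable) assignment, so $\nu$ is individually rational. Every participant is strictly better off under $\nu$ and no one is worse off, so $\nu$ dominates $\mu$ and $\mu$ is not efficient.

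For the reverse implication, assume $\mu$ is dominated by some matching $\mu'$. The key first step exploits maximality: since $\mu$ is individually rational, every student matched under $\mu$ has an acceptable assignment, so $\mu'_iR_i\mu_i$ forces such a student to be matched under $\mu'$ as well; thus the $\mu$-matched set is contained in the $\mu'$-matched set, and since $\mu$ is maximal these sets coincide. In particular a student $j$ with $\mu'_jP_j\mu_j$ is matched by both $\mu$ and $\mu'$, to distinct schools. Now trace a walk: pick such a $j_1$ and put $b_k:=\mu'_{j_k}$; if $b_k$ has a free seat under $\mu$ stop, and otherwise the inequality $|\mu'_{b_k}|\le q_{b_k}=|\mu_{b_k}|$ combined with $j_k\in\mu'_{b_k}\setminus\mu_{b_k}$ gives $\mu_{b_k}\setminus\mu'_{b_k}\neq\emptyset$, so choose $j_{k+1}$ there; note $\mu_{j_{k+1}}=b_k$ and, by domination again, $j_{k+1}$ also strictly improves. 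Finiteness of $I$ forces one of two outcomes: the walk reaches a school with a free seat --- yielding an improving chain along $j_1,\dots,j_k$ aimed at that vacancy --- or some student recurs, $j_{k+1}=j_\ell$, in which case $\mu_{j_\ell}=b_k$ and $j_\ell,\dots,j_k$ form an improving cycle, each pointing to the next one's $\mu$-seat and $j_k$ pointing back to $j_\ell$'s.

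I expect the bookkeeping in this last step to be the main obstacle: one must be sure the walk can always be continued until it terminates (this is exactly where maximality enters, via $|\mu_{b_k}|=q_{b_k}$ when $b_k$ is full), that a recurrence genuinely closes up into a cycle, and that the extracted chain or cycle respects capacities even when schools repeat along the walk --- in both cases every participant simply moves to another school, so the number of students at each school is unchanged along a cycle and merely rebalanced between $j_1$'s old seat and the filled vacancy along a chain. It is also worth stating the definitions of improving chain and improving cycle explicitly and remarking that, for a maximal $\mu$, no improving chain can terminate by absorbing a currently unmatched student (that would contradict maximality), so the vacancy-based chain the construction produces is the only form that can occur.
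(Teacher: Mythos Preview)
Your proposal is correct and follows essentially the same approach as the paper: both directions are argued by contrapositive, the forward direction by executing the chain/cycle, and the reverse by first using maximality to show the set of matched students is the same under $\mu$ and $\mu'$, then tracing through the set of strictly improving students until either a vacancy (chain) or a repetition (cycle) is reached. Your write-up is in fact more careful than the paper's about why the trace can always be continued and why the extracted object respects capacities.
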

\begin{proof}
\textbf{``Only If'' Part.} Let $\mu$ be an efficient matching. If
it admits an improving chain $\{i_1,..i_n, c_1, .., c_{n+1}\}$, then we can define a  new matching by assigning each agent $i_k$ to $c_{k+1}$ while
keeping the assignments of the others the same. By the improving chain definition, that new matching dominates $\mu$, contradicting our starting supposition that $\mu$ is efficient. The same argument shows for the case of improving cycle.

\textbf{``If'' Part.} Let $\mu$ be a maximal matching that does not admit improving chains or cycles. Assume for a contradiction that there exists a matching
$\mu'$ that dominates $\mu$.

Let $W=\{i\in I:\ \mu'_{i}P_{i}\mu_{i}\}$. By the supposition, $W\neq\emptyset$.
Note that for any student $i$ with $\mu_{i}\neq\emptyset$, we have
$\mu'_{i}\neq\emptyset$. This, along with the maximality of $\mu$,
implies that $|\mu'|=|\mu|$. Hence, for any student $i$ with $\mu_{i}=\emptyset$,
$\mu'_{i}=\emptyset$.

Enumerate the students in $W=\{i_{1},..,i_{n}\}$ and write
$\mu'_{i_{k}}=c_{k}$ for any $k=1,..,n$. If $|\mu_{c_{k}}|<q_{c_{k}}$
for some $k$, then the pair $\{i_{k},c_{k}\}$ constitutes an
improving chain, a contradiction.

Suppose that $|\mu_{c_{k}}|=q_{c_{k}}$ for any $k=1,..,n$.
As $c_{1}$ does not have excess capacity at $\mu$, and $\mu'_{i_{1}}=c_{1}$,
we have another student in $W$, say $i_{2}$, such that $\mu_{i_{2}}=c_{1}$.
Then, consider student $i_{2}$, and as $c_{2}$ does not have excess
capacity at $\mu$ and $\mu'_{i_{2}}=c_{2}$, we have another student
in $W$, say $i_{3}$, such that $\mu_{i_{3}}=c_{2}$. If we continue
to apply the same arguments to the other students in $W$, as $W$
is finite, we would eventually obtain an improving cycle, a contradiction.
\end{proof}

Let now $\psi$ be an
$EAM$ mechanism, and $\mu$ and $\mu'$ be its first stage and final
outcome, respectively. As students are not assigned to one of their
unacceptable schools in Step $1$ of $\psi$, $\mu$
is individually rational.

Assume for a contradiction that $\mu$ is not maximal and there exists 
$\mu''\neq \mu$ such that $|\mu''| > |\mu|$. Let $\{i_1,..,i_n\}$ be the agent-enumeration that is used under $\psi$.

As $|\mu''| > |\mu|$, there exists some agent $i_k\in I$ such that $\mu''_{i_k}\neq \emptyset$ and $\mu_{i_k}=\emptyset$. Let $i_{k'}$ be the first agent according to the above enumeration such that $\mu''_{i_{k'}}\neq \emptyset$ and $\mu_{i_{k'}}=\emptyset$. This means that for each $k < k'$, either $\mu_{i_k}\neq \emptyset$ or $\mu_{i_k}=\emptyset$ and $\mu''_{i_k}=\emptyset$.   Let $B(\mu,k')=\{i\in N:\ \mu_{i_{k}}\neq \emptyset$ for any $k < k'\}$. That is, it is set of agents who come before agent $i_{k'}$ in the above enumeration and are assigned under matching $\mu$.

Now consider agent $i_{k'}$. By the definition of $\psi$, $\mu_{i_{k'}}=\emptyset$ because it is not possible to match agent $i_{k'}$ to some of his acceptable objects while keeping all the agents in $B(\mu, k')$ assigned to one of their acceptable objects. This means that in order for agent $i_{k'}$ to receive one of his acceptable objects, one of the assigned agents under $\mu$ from $B(\mu, k')$ has to be unassigned. This arguments holds for each other agent who is assigned under $\mu''$, but not under $\mu$. This implies that $\mu$ is maximal.

In Step $2$ of $\psi$, new matchings are obtained by implementing
improving chains and cycles (if any). By their definitions no student receives a worse school than his assignment
$\mu$. This, along with the individual rationality of $\mu$, implies
that $\mu'$ is maximal. The efficiency of $\mu'$ directly comes
from the Lemma above.

\subsubsection*{Proposition \ref{prop:AMM-Equilibrium}}

Let $\psi$ be an $EAM$ mechanism. The first student
in Step $0$ of the $EAM$ obtains his top choice
by reporting it as the only acceptable choice. By the same reasoning, the
second student can obtain his top choice among the remaining schools with seats after considering the first student's assignment by reporting that school as his only
acceptable choice. Once we repeat the same arguments for every other student,
we not only find an equilibrium of $\psi$, but also conclude that
it is the unique equilibrium outcome, which coincides with the outcome
of serial dictatorship with the ordering being the same as that in Step
$0$ of $\psi$.

Let $\phi$ be a $FAM$ mechanism. Let $\mu$ be a stable matching
at $P$. Consider the preferences submission $P'$ under which for
any student $i$, the only acceptable school is $\mu_{i}$. Any unassigned
student at $\mu$ reports no school acceptable at $P'$. It is easy
to verify that $\phi\left(P'\right)=\mu$.

Next, we claim that $P'$ is an equilibrium submission under $\phi$.
Suppose for a contradiction that there exist a student $i$ and $P''_{i}$
such that $\phi_{i}(P''_{i},P'_{-i})P_{i}\phi_{i}\left(P'\right)$.
For ease of writing, let $\phi_{i}(P''_{i},P'_{-i})=s$ and $\phi_{i}\left(P'\right)=s'$.
As $\mu$ is stable, $|\mu_{s}|=q_{s}$. This, along with the definition
of $P'$ and $\phi_{i}(P''_{i},P'_{-i})=s$, implies that there exists
a student $j\neq i$ such that $\mu_{j}=s$ and $\phi_{j}(P''_{i},P'_{-i})=\emptyset$.
Moreover, from the stability of $\mu$, we also have $j\succ_{s}i$.
These altogether contradict the fairness for unassigned students of
$\phi$, showing that $P'$ is equilibrium of $\phi$.

\subsubsection*{Theorem \ref{thm:EAMNotDominatedInEq}}

We will use the following.
\begin{lem*}
\label{lemma} Let $\psi$ be an $EAM$ and $\phi$ be an individually
rational mechanism. In any market $(I,S,\succ,q)$ and problem $P$,
if $\left|\psi\left(P'\right)\right|<|\phi\left(P''\right)|$ where
$P'$ and $P''$ are equilibria under $\psi$ and $\phi$, respectively,
then there exists a student $i$ such that $\psi_{i}\left(P'\right)P_{i}\phi_{i}\left(P''\right)P_{i}\emptyset$.
\end{lem*}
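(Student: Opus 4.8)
We must show: if $\psi$ is an $EAM$ mechanism, $\phi$ is individually rational, $P'$ is an equilibrium of $\psi$ at $P$, $P''$ is an equilibrium of $\phi$ at $P$, and $|\psi(P')| < |\phi(P'')|$, then some student $i$ satisfies $\psi_i(P') \mathrel{P_i} \phi_i(P'') \mathrel{P_i} \emptyset$ — i.e., $i$ strictly prefers his $EAM$-equilibrium assignment to his $\phi$-equilibrium assignment, and the latter is still acceptable.

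\textbf{Reduction to a combinatorial fact about serial dictatorship.} The plan is to first eliminate the mechanisms from the picture. By Proposition \ref{prop:AMM-Equilibrium} the equilibrium outcome $\psi\left(P'\right)$ equals $\mu:=$ the serial dictatorship outcome for the \emph{true} profile $P$ under the enumeration $\sigma$ used by the chosen $EAM$ mechanism. Write $\nu:=\phi\left(P''\right)$. I would first note that $\nu$ is individually rational with respect to the true preferences: since $P''$ is an equilibrium of $\phi$, each student $i$ weakly prefers $\nu_{i}$ to $\phi_{i}(P_{i},P''_{-i})$, and the latter is acceptable because $\phi$ is individually rational (hence its outcome is acceptable with respect to the report $P_{i}$). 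So it suffices to prove the following statement, which no longer mentions equilibria: if $\nu$ is an individually rational matching for $P$ with $|\nu|>|\mu|$, then some student $i$ satisfies $\mu_{i}P_{i}\nu_{i}P_{i}\emptyset$. I would also record the two structural facts coming from greediness of $\mu$: if $|\mu_{s}|<q_{s}$ then no $\mu$-unassigned student finds $s$ acceptable; and if $i$ is assigned under $\mu$, then every school $i$ strictly prefers to $\mu_{i}$ was already at capacity when $i$'s turn came, hence is at capacity in $\mu$.

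\textbf{Contradiction hypothesis.} Suppose no witness exists. Because every $\nu$-assigned student gets an acceptable school, this is equivalent to: for every student $i$, either $\nu_{i}=\emptyset$ or $\nu_{i}R_{i}\mu_{i}$. I want to derive $|\nu|\le|\mu|$, contradicting the hypothesis.

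\textbf{Augmenting-path argument.} The natural first attempt --- start from a student assigned by $\nu$ but not by $\mu$, look at the school she gets, find a student $\mu$ places there who must be displaced, use the hypothesis to say that displaced student is weakly better off under $\nu$, and iterate --- stalls, because a student whom $\nu$ leaves unassigned although $\mu$ assigned her breaks the chain and is invisible to the hypothesis. To get around this I would route everything through the symmetric-difference decomposition. Split each school $s$ into $q_{s}$ unit slots and realize $\mu$ and $\nu$ as matchings in the resulting unit-capacity market, choosing the slot assignments so that any student with $\mu_{i}=\nu_{i}$ occupies the \emph{same} slot in both. The union of the two edge sets then decomposes into alternating paths and cycles; since $|\nu|>|\mu|$ there is an augmenting path $\Pi$, with exactly one more $\nu$-edge than $\mu$-edge. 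By parity $\Pi$ has one student endpoint and one slot endpoint, and the slot endpoint is a slot of some school $s$ that $\nu$ uses but $\mu$ does not, so $|\mu_{s}|<q_{s}$; let $w$ be the student joined to that slot in $\Pi$, so $\nu_{w}=s$. If $w$ is $\mu$-unassigned, then $w$ finds the not-fully-subscribed school $s$ acceptable yet was passed over by the serial dictatorship --- impossible. Otherwise $w$ also has a $\mu$-edge, so $\mu_{w}$ is a school $s'\neq s$ (the slot convention forbids $s'=s$); the contradiction hypothesis gives $\nu_{w}R_{w}\mu_{w}$, i.e.\ $sP_{w}s'$, so $w$ strictly prefers the acceptable school $s$ to her serial-dictatorship assignment $s'$, forcing $s$ to have been full at $w$'s turn and hence full in $\mu$ --- contradicting $|\mu_{s}|<q_{s}$. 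So no augmenting path exists, hence $|\nu|\le|\mu|$, a contradiction; therefore a witness $i$ exists.

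\textbf{Where the difficulty sits.} The only genuinely delicate point is that the naive displacement chain is not well-founded in the presence of students demoted to $\emptyset$ by $\nu$, and the remedy is to work with the augmenting-path structure of $\mu\cup\nu$; the rest is bookkeeping. Handling capacities cleanly --- the slot-splitting together with the convention that coincident assignments share a slot, which is what makes "$s'\neq s$" automatic and keeps the components honest alternating paths --- is the other place where one must be careful, but it is routine. Everything substantive is supplied by Proposition \ref{prop:AMM-Equilibrium} (which pins $\mu$ down as a serial dictatorship outcome and thereby gives the greedy facts above) and by individual rationality of $\phi$ at equilibrium.
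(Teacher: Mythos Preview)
Your proof is correct, but it is considerably more elaborate than the paper's argument, which dispenses entirely with the contradiction setup and the augmenting-path machinery. The paper observes that $|\psi(P')|<|\phi(P'')|$ immediately yields a school $s$ with $|\psi_{s}(P')|<|\phi_{s}(P'')|\le q_{s}$; any student $i\in\phi_{s}(P'')\setminus\psi_{s}(P')$ then has $\phi_{i}(P'')=s\,P_{i}\,\emptyset$ (by individual rationality of $\phi$ together with the equilibrium property, exactly as in your reduction), while non-wastefulness of $SD$ --- which equals $\psi(P')$ by Proposition~\ref{prop:AMM-Equilibrium} --- forces $\psi_{i}(P')\,P_{i}\,s$, since $s$ has slack and $i$ is not placed there. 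That is the whole proof.

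Your route is not wrong, just circuitous: after reducing to the purely combinatorial statement about $\mu=SD(P)$ versus an individually rational $\nu$, you could have run the same two-line counting argument (pick $s$ with $|\nu_{s}|>|\mu_{s}|$, pick $i\in\nu_{s}\setminus\mu_{s}$, invoke non-wastefulness). The symmetric-difference decomposition and slot-splitting ultimately land you on exactly this observation at the end of the augmenting path, so they buy nothing here. What your approach \emph{would} buy is robustness in settings where the base matching is merely maximal rather than non-wasteful --- there a single overfull school does not immediately hand you a witness, and the path structure becomes genuinely useful --- but in the present lemma the serial-dictatorship identification makes non-wastefulness available for free, and the direct pigeonhole on schools suffices.
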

\begin{proof}
In a market $(I,S,\succ,q)$ and problem $P$, let $\left|\psi\left(P'\right)\right|<|\phi\left(P''\right)|$
where $P'$ and $P''$ are equilibria under $\psi$ and $\phi$, respectively.
This implies that for some school $s$, $\left|\psi_{s}\left(P'\right)\right|<\left|\phi_{s}\left(P''\right)\right|\leq q_{s}$.
Hence, let $i\in\phi_{s}\left(P''\right)\setminus\psi_{s}\left(P'\right)$.
By the individual rationality of $\phi$ and $P''$ being equilibrium
under $\phi$, we have $sP_{i}\emptyset$, where $\phi_{i}\left(P''\right)=s$.
As the unique equilibrium outcome of $\psi$ coincides with the (truthtelling)
outcome of a $SD$ mechanism (Proposition $5$), we have $\psi\left(P'\right)=SD\left(P\right)$.
Hence, school $s$ has an excess capacity under $SD\left(P\right)$.
Moreover, from above, $\psi_{i}\left(P'\right)=SD_{i}\left(P\right)\neq s$.
Hence, by the non-wastefulness of $SD$, $i$ must be matched to a
school strictly better than $s$ and therefore $\psi_{i}\left(P'\right)=SD_{i}\left(P\right)P_{i}\phi_{i}\left(P''\right)P_{i}\emptyset$,
which finishes the proof.
\end{proof}
Let now $(I,S,\succ,q)$ be a market and $\psi$ be an $EAM$ mechanism.
Assume for a contradiction that an individually rational mechanism
$\phi$ size-wise dominates $\psi$ in equilibrium. This in particular
implies that for some problem $P$, $\left|\psi\left(P'\right)\right|<|\phi\left(P''\right)|$
for every equilibria $P'$ and $P''$ under $\psi$ and $\phi$, respectively.
In what follows, we will fix one such pair $P',P''$. We prove the
result in two steps.

\textbf{Step 1.} By the Lemma above, there exists a student $i$ such
that $\psi_{i}\left(P'\right)P_{i}\phi_{i}\left(P''\right)P_{i}\emptyset$.
Let $\bar{P}_{i}$ be the preference relation that keeps the relative
rankings of the schools the same as under $P_{i}$, while reporting
any school that is worse than $\psi_{i}\left(P'\right)$ as unacceptable.
In other words, $\bar{P}_{i}$ truncates $P_{i}$ below $\psi_{i}\left(P'\right)$.
Let us write $\bar{P}=(\bar{P}_{i},P_{-i})$. Recall that the unique
equilibrium outcome of $\psi$ always coincides with the truthtelling
outcome of a $SD$ mechanism (Proposition $5$). Moreover, by the
construction of $\bar{P}$, $SD\left(P\right)=SD(\bar{P})$. This
in turn implies that $\psi\left(P'\right)=\psi\left(\bar{P}'\right)$
for every equilibrium $\bar{P}'$ under $\psi$ in problem $\bar{P}$.

We next consider problem $\bar{P}$. If there exists no student $j$
such that $\psi_{j}\left(\bar{P}'\right)\bar{P}_{j}\phi_{j}\left(\bar{P}''\right)\bar{P}_{j}\emptyset$
for some equilibria $\bar{P}'$ and $\bar{P}''$ under $\psi$ and
$\phi$, respectively, then we move to Step $2$. Otherwise, we pick
such student $j$. Note that because of the definition of $\bar{P}_{i}$
states that any outcome below $\psi_{i}\left(\bar{P}'\right)$ is
unacceptable for $i$ and $\phi$ is individually rational, $\psi_{j}\left(\bar{P}'\right)\bar{P}_{j}\phi_{j}\left(\bar{P}''\right)\bar{P}_{j}\emptyset$
cannot hold for $j=i$, therefore $j\neq i$. Then, as the same as
above, let $\bar{P}_{j}$ be the preference list that truncates $P_{j}$
below $\psi_{j}\left(\bar{P}'\right)$. Let us write $\tilde{P}=(\bar{P}_{i},\bar{P}_{j},P_{-\{i,j\}})$.
By the same reason as above, $\psi\left(P'\right)=\psi\left(\tilde{P}'\right)$
for any equilibrium $\tilde{P}'$ under $\psi$ in problem $\tilde{P}$.

We next consider problem $\tilde{P}$. If there exists no student
$k$ such that $\psi_{k}\left(\tilde{P}'\right)\tilde{P}_{k}\phi_{k}\left(\tilde{P}''\right)\tilde{P}_{k}\emptyset$
for some equilibria $\tilde{P}'$ and $\tilde{P}''$ under $\psi$
and $\phi$, respectively, then we move to Step $2$. Otherwise, we
pick such a student $k$. By the same reason as above, student $k$
is different than both $i$ and $j$. Then, we follow the same arguments
above and obtain a new preference profile. In each iteration, we have
to consider a different student. But then, since there are finitely
many students, this case cannot hold forever. Hence, we eventually
obtain a problem, say $\hat{P}$, in which there exists no student
$h$ such that $\psi_{h}(\hat{P}')\hat{P}_{h}\phi_{h}\left(\hat{P}''\right)\hat{P}_{h}\emptyset$
for some equilibria $\hat{P}'$ and $\hat{P}''$ under $\psi$ and
$\phi$, respectively, and move to Step $2$. We also have $\psi\left(P'\right)=\psi(\hat{P}')$
for any equilibrium $\hat{P}'$ under $\psi$ in problem $\hat{P}$.

\textbf{Step $2$.} By the Lemma above, in problem $\hat{P}$, we
have $\left|\psi\left(\hat{P}'\right)\right|\geq\left|\phi\left(\hat{P}''\right)\right|$
for any equilibria $\hat{P}'$ and $\hat{P}''$ under $\psi$ and
$\phi$, respectively. If it holds strictly for some equilibria, then
we reach a contradiction. Suppose $\left|\psi(\hat{P}')\right|=\left|\phi\left(\hat{P}''\right)\right|$
for any equilibria $\hat{P}'$ and $\hat{P}''$.

We now claim that $\hat{P}''$ is an equilibrium under $\phi$ in
problem $P$. Suppose it is not, and let student $k$ have a profitable
deviation, say $\ddot{P}_{k}$, from $\hat{P}''_{k}$. This means
that $\phi_{k}\left(\ddot{P}_{k},\hat{P}''_{-k}\right)P_{k}\phi_{k}\left(\hat{P}''\right)$.
But then, by construction above, $\hat{P}_{k}$ preserves the relative
rankings under $P_{k}$. This implies that $\phi_{k}\left(\ddot{P}_{k},\hat{P}''_{-k}\right)\hat{P}_{k}\phi_{k}\left(\hat{P}''\right)$,
contradicting $\hat{P}''$ being an equilibrium under $\phi$ in problem
$\hat{P}$.

Recall that $\psi\left(P'\right)=\psi(\hat{P}')$. Hence, this, along
with $\left|\psi(\hat{P}')\right|=\left|\phi\left(\hat{P}''\right)\right|$
and our above finding, implies that in problem $P$, $\left|\psi\left(P'\right)\right|=\left|\phi\left(\hat{P}''\right)\right|$
where $P'$ and $\hat{P}''$ are equilibria under $\psi$ and $\phi$,
respectively. Therefore, we constructed an equilibrium pair for problem
$P$ where $\psi$ matches as many students as $\phi$, contradicting
our assumption that this does not hold in problem $P$.

\subsubsection*{Theorem \ref{thm:FAMM-equilibriumNumber}}

$(i)$. First, by the rural hospital theorem \citep{roth1984b}, the
number of assignments in any stable matching is the same as that of
DA. Let $\psi$ be a $FAM$ mechanism. Assume for a contradiction
that there exist a problem $P$ and an equilibrium profile $P'$ under
$\psi$ such that $\left|\psi\left(P'\right)\right|<|DA\left(P\right)|$.
For ease of writing, let $DA\left(P\right)=\mu$ and $\psi\left(P'\right)=\mu'$.

We now claim that for some student $i$, $\mu_{i}=s$ for some school
$s$ whereas $\mu'_{i}=\emptyset$ and, moreover, $|\mu'_{s}|<q_{s}$.
To prove this claim, let us define $W=\{i\in I:\ \mu_{i}=s$ and $\mu'_{i}=\emptyset\}$.
By our supposition that $|DA\left(P\right)|>\left|\psi\left(P'\right)\right|$,
we have $W\neq\emptyset$. Suppose that for each $i\in W$ with $\mu_{i}=s$,
$|\mu'_{s}|=q_{s}$. But then this implies that $|\mu'|\geq|\mu|$,
contradicting our initial supposition, which finishes the proof of
the claim.

Let $i\in I$ such that $\mu_{i}=s$, $\mu'_{i}=\emptyset$, and $|\mu'_{s}|<q_{s}$.
Now, consider the following preferences $P''$:
\begin{center}
\[
P''_{k}=\left\{ \begin{array}{ll}
P'_{k} & \mbox{If \ensuremath{k\neq i}}\\
s,\emptyset & \mbox{If \ensuremath{k=i}}
\end{array}\right.
\]
\par\end{center}

First, observe that there exists a (individually rational) matching
at $P''$ that assigns $|\mu'|+1$ many students (to see this, keep
the assignment of everyone except student $i$ the same as at $\mu'$,
and place student $i$ at school $s$). Therefore, due to the maximality
of $\psi$, we have $\left|\psi\left(P''\right)\right|\geq|\mu'|+1$.
If student $i$ is assigned to school $s$ at $\psi\left(P''\right)$
then this contradicts $P'$ being equilibrium under $\psi$. Hence,
$\psi_{i}\left(P''\right)=\emptyset$. But then, by the definition
of $P''$, $\psi\left(P''\right)$ is individually rational at $P'$.
This, along with the maximality of $\psi$, implies that $\left|\psi\left(P'\right)\right|\geq\left|\psi\left(P''\right)\right|$,
contradicting our previous finding that $\left|\psi\left(P''\right)\right|\geq\left|\psi\left(P'\right)\right|+1$,
which finishes the proof of the first part.

$(ii)$. Let us consider $I=\{i,j,k,h\}$ and $S=\{a,b,c\}$, each
with unit capacity. The preferences and the priorities are given below.
\begin{center}
$P_{i}:\ a,b,\emptyset$; $P_{j}:\ c,a,\emptyset$; $P_{k}:\ c,a,\emptyset$;
$P_{h}:\ c,\emptyset$.
\par\end{center}

\begin{center}
$\succ_{a}:\ k,i,j,h$; $\succ_{b}:\ k,h,j,i$; $\succ_{c}:\ k,h,i,j$.
\par\end{center}

Let $\psi$ be a $FAM$ mechanism with the student ordering $k,j,i,h$.
Mechanism $\psi$ is such that it produces matching $\mu$ at $P$
where $\mu_{i}=b$, $\mu_{j}=a$, $\mu_{k}=c$, and $\mu_{h}=\emptyset$.
For any $P'_{i}\in\mathcal{P}$ with $bP'_{i}\emptyset$, let $\psi(P'_{i},P_{-i})=\mu'$
where $\mu'_{i}=b$, $\mu'_{j}=\emptyset$, $\mu'_{k}=a$, and $\mu'_{h}=c$.
Moreover, for any $P'_{i}\in\mathcal{P}$ with $\emptyset P'_{i}b$,
$\psi(P'_{i},P_{-i})=\mu''$ where $\mu''_{i}=\emptyset$, $\mu''_{j}=\emptyset$,
$\mu''_{k}=a$, and $\mu''_{h}=c$. And, for any $P'_{h}\in\mathcal{P}$,
let $\psi(P_{-h},P'_{h})=\mu$.

Note that student $j$ can never get school $c$ under $\psi$ by
misreporting because otherwise student $h$ would be unassigned, and
he has higher priority at school $c$. It is immediate to see that
the above matchings can be obtained in the course of $FAM$ through
particular selection. All of these show that under $\psi$, truth-telling
is an equilibrium at $P$, and $|\psi\left(P\right)|=3$. On the other
hand, $DA\left(P\right)$ is such that $DA_{i}\left(P\right)=a$,
$DA_{k}\left(P\right)=c$, and $DA_{h}\left(P\right)=DA_{j}\left(P\right)=\emptyset$.
Hence, $|\psi\left(P\right)|>|DA\left(P\right)|$, finishing the proof
of the second part.

\subsubsection*{Proposition \ref{prop:FractionOfTruthTelling}}

Let $P'=(P'_i,P_{-i})$, $\psi(P)=\mu$, and $\psi(P'_i,P_{-i})=\mu'$. Assume that $|\mu'| > |\mu|$. By our supposition, $\mu'_i P_i \mu_i$. This, along with the fact that $P_j=P'_j$ for each $j\neq i$, $\mu'$ is individually rational in problem $P$. But then, $|\mu'| > |\mu|$ contradicts the fact  that $\mu$ is maximal in problem $P$.

Consider a problem where $\{i,j\}\subseteq N$, $\{a,b\}\subseteq S$, each with unit capacity. Let $P_i:\ a, \emptyset$, $P_j:\ a, \emptyset$, and each other student (if any) finds every school unacceptable. Without loss of generality, assume that the outcome of $\psi$ in that problem, say $\mu$,  is such that $\mu_i=a$, and each other student is unassigned. 

Consider a problem where $P'_i:\ a, b, \emptyset$, while each other student's preferences are the same as above.   Under the true preferences, $\psi$ produces $\mu'$ where $\mu'_i=b$, $\mu'_j=a$, and each other student is unassigned. However, student $i$ can misreport his preferences by submitting $P_i$ above as, under this false profile, $\psi$ produces matching $\mu$ above.  Finally, note that $|\mu'| > |\mu|$, finishing the proof. 
\newpage

\end{document}